\documentclass[sigplan,nonacm]{acmart}

\usepackage{amsmath}
\usepackage{booktabs}
\usepackage{array}
\usepackage{multirow}
\usepackage{graphicx}
\usepackage{xcolor}
\usepackage{xspace}
\usepackage{algorithm}
\usepackage{algpseudocode}

\hypersetup{colorlinks=true,linkcolor=blue,citecolor=blue,urlcolor=blue}

\newcommand{\sys}{SemBridge\xspace}
\newcommand{\cset}{C^{\mathrm{req}}}
\newcommand{\obs}{\mathcal{O}}
\newcolumntype{P}[1]{>{\raggedright\arraybackslash}p{#1}}
\newtheorem{proposition}{Proposition}

\begin{document}

\title[SemBridge]{SemBridge: Compiling Consumer Observations into Cross-Stack Communication Plans}

\author{Genlang Chen}
\email{cgl@zju.edu.cn}
\affiliation{%
  \institution{NingboTech University}
  \city{Ningbo}
  \country{China}
}

\author{Junyi Zhu}
\email{JunyiZhu.cs@gmail.com}
\affiliation{%
  \institution{Dalian Ocean University}
  \city{Dalian}
  \country{China}
}

\author{Yuanshan Lin}
\email{linyuanshan@dlou.edu.cn}
\affiliation{%
  \institution{Dalian Ocean University}
  \city{Dalian}
  \country{China}
}

\renewcommand{\shortauthors}{G. Chen, J. Zhu, and Y. Lin}

\begin{abstract}
Distributed-tensor systems specify where values reside, while collective systems optimize how requested operations execute. At a boundary between vendor runtimes that cannot share a native communicator, neither abstraction states what a remote consumer must observe. \sys fills this gap by compiling graph and runtime facts into a typed contract for the consumer-visible result and its delivery obligations. The contract captures provenance, substitutability, completion, authority, demand, and native-domain locality. A deterministic lowerer constructs backend-neutral communication plans, and a symbolic checker validates each plan before execution across CUDA/NCCL and CANN/HCCL. An independent layout-only planner handles all 72 structural transitions but establishes only 54 complete obligations; a byte-only minimizer proposes 40 semantically invalid candidates, all rejected by \sys. On nine real edges, \sys produces distinct observation-aware plans that reduce startups on all nine and payload bytes on the three result edges. A live CUDA/CANN run derives and executes full-logit reconstruction, source projection, and owner-token delivery from log-probability, token-only, and owner-scoped requests. On a measured two-host 1-GbE capacity-spillover deployment, source projection cuts result traffic by more than 99.97\% and increases throughput by 8.92--80.20\% across Dense, MoE, and MiniMax workloads. All 18 MiniMax restart pairs at concurrency 1, 8, and 16 favor source projection. A Qwen3-14B MLP slice additionally verifies bitwise activation-shard delivery and HCCL completion of row-parallel partials. These results establish consumer observation as a semantic layer between placement and collective execution.
\end{abstract}

\maketitle

\section{Introduction}

Large language models increasingly exceed the capacity of one accelerator pool, while available resources can belong to distinct hardware and software stacks. Heterogeneous inference systems respond by selecting model placements, parallelism strategies, and request routes over devices with different compute, memory, and network characteristics~\cite{mei2025helix,jiang2024hexgen,jiang2025hexgen2}. Phase-disaggregated serving further transfers request state between specialized prefill and decode pools~\cite{patel2024splitwise,zhong2024distserve}. These systems make heterogeneous placement practical. Across CUDA/NCCL and CANN/HCCL, however, endpoints remain in disjoint native communicator domains. Once a partition is fixed across them, what must cross the boundary for remote consumers to observe the correct value?

Distributed-tensor compilers describe layouts, propagate sharding, and generate parallel programs~\cite{shazeer2018mesh,xu2021gspmd,alabed2025partir}. Placement systems search over inter- and intra-operator parallelism~\cite{jia2019flexflow,zheng2022alpa,jia2022whale}, and cross-mesh resharding optimizes transfers between layouts~\cite{zhuang2023optimizing}. Collective systems synthesize or tune a physical algorithm for a requested AllReduce, AllGather, broadcast, or related primitive~\cite{cai2021sccl,shah2023taccl,xu2025autoccl,hei2026heteccl}. Mixed-vendor substrates can realize such collectives through cross-domain point-to-point transport and vendor-local combining operations~\cite{wang2026hetccl}. \sys augments these layout and collective abstractions with the result a remote consumer observes: a projection, one substitutable replica, a completed value, or an owner-issued decision. It lowers that requirement before physical collective selection.

Autoregressive result transfer makes the gap concrete. In our integration, the native tail output head materializes full logits. A layout-directed plan slices the tensor by TP lane, transfers the shards, reconstructs the full value with AllGather, and then samples. A token-only request can apply the same deterministic sampler before the boundary and communicate \texttt{int32} token IDs; a log-probability request still requires complete logits. Selected homogeneous serving paths already communicate compact sampler outputs~\cite{vllm2026ppoutput,vllmascend2026sampling}. \sys generalizes this path-specific choice into a derived cross-stack transformation: the serving API determines the consumer surface, the graph and runtime determine its delivery obligations, and the checker validates the resulting plan. The same contract composes projection with lane-preserving shards, partial completion, substitutable replicas, authoritative decisions, and native-domain constraints. Thus identical endpoints can admit different legal payloads and communication graphs.

\begin{figure*}[t]
  \centering
  \includegraphics[width=\textwidth]{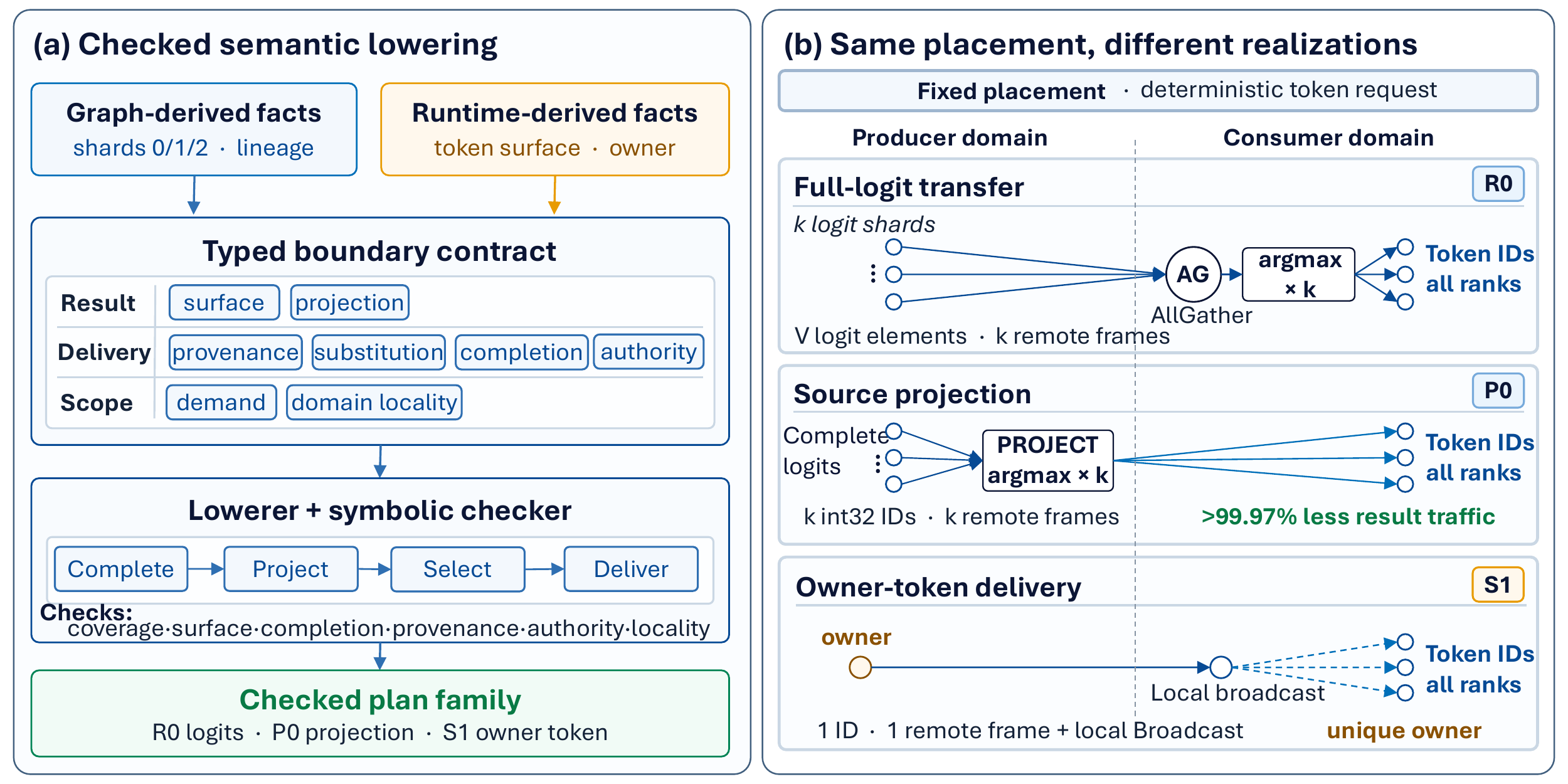}
  \caption{Consumer-observation-guided lowering. (a) Graph and runtime facts form a typed boundary contract; an ordered lowerer and symbolic checker produce a checked plan. (b) The dashed line separates producer and consumer domains under one fixed placement. Full-logit transfer (R0), source projection (P0), and owner-token delivery (S1) preserve the same requested token ID. R0 transfers $V$ logit elements in $k$ remote frames, P0 transfers $k$ \texttt{int32} IDs in $k$ frames, and S1 transfers one ID in one remote frame followed by local Broadcast.}
  \Description{Two-panel overview of graph and runtime facts flowing through a typed contract, ordered lowering, and symbolic checking, followed by three communication plans that preserve the same requested token ID with different cross-domain payload and frame counts.}
  \label{fig:motivation}
\end{figure*}

Figure~\ref{fig:motivation} separates two decisions: which result the consumer observes and how that result must be delivered. For a token-only request, full-logit transfer and source-side projection expose the same token surface; they differ only in where argmax runs. Log-probability requests retain complete-logit transfer. Delivery rules separately preserve shard mappings, complete partials, select representatives from derived replica classes, and source decisions from their owner. SBP and DTensor encode shard, replicate, and partial states~\cite{yuan2021oneflow,pytorch2026dtensor}; \sys adds the consumer surface, demand, substitutability, authority, execution identity, and native-domain facts.

An independent planner isolates what layout information alone can establish. It receives shape, dtype, meshes, domains, and Shard/\allowbreak Replicate/\allowbreak Partial placements. Although it supports all 72 layout transitions, it establishes only 54 complete application obligations. On those 54 cases, \sys reduces bytes in 30 and startups in 35. Authority adds another 10 byte and 13 startup reductions across the remaining 18 cases. On nine real edges, \sys changes every plan, reducing startups on all nine and bytes on the three result edges. A shared-IR oracle matches the production lowerer throughout the bounded family; a byte-only minimizer proposes 40 semantically invalid candidates, all rejected by \sys.

All plans run through the same persistent transport and vendor-local collectives. In the projection-placement comparison, forward traffic and result-frame counts match. Source-side projection reduces result traffic by 99.979--99.996\% and improves throughput by 8.92\% on Dense TP4 and 26.39--62.87\% on MoE TP2. On MiniMax, it improves throughput by 18.64--80.20\% across concurrency 1, 8, and 16, with all 18 restart pairs positive. Separate controls show that fewer forward startups help Dense, while owner-token fanout helps MoE when result synchronization lies on the critical path.

This work makes three contributions:
\begin{itemize}
  \item A typed boundary contract separates the consumer-visible result from delivery obligations over provenance, substitution, completion, and authority, while retaining demand and native-domain locality.
  \item A deterministic lowerer compiles the contract into a backend-neutral graph. A symbolic checker validates coverage, surface, completeness, provenance, authority, and domain locality before runtime realization.
  \item A plan-parametric CUDA/\allowbreak NCCL--CANN/\allowbreak HCCL runtime executes API-selected full-logit, projected-token, and owner-token plans together with representative, exact-shard, and partial-completion plans. Experiments cover three model families and MiniMax concurrency from 1 to 16.
\end{itemize}

\begin{table*}[t]
  \centering
  \caption{Abstraction boundary relative to the closest systems.}
  \label{tab:capability}
  \small
  \begin{tabular}{P{0.22\textwidth} P{0.34\textwidth} P{0.36\textwidth}}
    \toprule
    Research thread & Primary representation or optimization & Relation to \sys \\
    \midrule
    OneFlow SBP; PyTorch DTensor~\cite{yuan2021oneflow,pytorch2026dtensor} & Split/shard, broadcast/replicate, partial, and layout redistribution & Supplies tensor state; \sys adds consumer result surfaces, demand, authority, execution identity, and cross-domain observation checks \\
    GSPMD; PartIR; cross-mesh resharding~\cite{xu2021gspmd,alabed2025partir,zhuang2023optimizing} & Sharding propagation, partition rewrites, and layout-directed multicast & Selects or realizes placement transitions; \sys selects the consumer-observable requirement for a fixed transition \\
    CoCoNet; Unity~\cite{jangda2022coconet,unger2022unity} & Joint computation/communication programs and verified graph substitutions & Rewrites program structure; \sys checks per-boundary observations and native-domain locality \\
    P$^2$; TACCL; MSCCLang~\cite{xie2022reduction,shah2023taccl,cowan2023mscclang} & Reduction programs and executable collective schedules & Implements a specified reduction or collective after \sys selects what information must cross \\
    HetCCL~\cite{wang2026hetccl} & Mixed-vendor device transport and hierarchical collectives & Realizes a requested mixed-vendor collective; \sys produces the checked logical requirement supplied to such a backend \\
    \bottomrule
  \end{tabular}
\end{table*}

\section{Cross-Domain Communication Semantics}

\subsection{Execution Model}

Let $\mathcal{D}$ be the set of native collective domains. A domain contains endpoints that can participate directly in one vendor runtime's communicator and collective semantics. For a boundary edge $e$, let $P_e$ be its producer endpoints, $C_e$ its possible consumer endpoints, and $\cset_e\subseteq C_e$ the consumers that demand the value. The function $D(x)\in\mathcal{D}$ maps endpoint $x$ to its native domain. A cross-domain operation connects endpoints whose domain identifiers differ; it is expressed in a backend-neutral logical graph before a transport implementation is chosen.

\sys accepts model placement and parallelism from a human plan, a distributed-tensor compiler, or a heterogeneous placement system~\cite{zheng2022alpa,alabed2025partir,mei2025helix}. It occupies the interface between placement and physical collective execution, where it identifies the weakest logical communication graph that preserves the observation required by each consumer.

\subsection{Placement and Consumer Observation}

Tensor layouts remain necessary. An exact-shard mapping identifies which producer supplies each consumer; a resharding operation changes the mapping between device meshes; a partial value records unfinished reduction state. Distributed-tensor systems already represent and propagate these properties~\cite{shazeer2018mesh,xu2021gspmd,zhuang2023optimizing,alabed2025partir}. The application boundary also contains facts that placement does not express: which consumers are active, which result surface they observe, whether replica substitution is valid, whether completion remains outstanding, and which producer owns a request-scoped decision.

The distinction changes both payload and graph shape. Lane $i$ must retain its producer mapping when it carries a distinct shard. A graph-derived replica may use one representative per remote domain. A partial must be reduced before a consumer that expects the complete value. A decision must originate at its owner. At a result edge, complete logits require shard transfer and reconstruction, whereas a token-only consumer permits deterministic sampling at the producer and transfer of the projected token. The endpoint layout can remain fixed in every case.

\subsection{Consumer Surfaces and Delivery Obligations}

We factor consumer observation into two orthogonal dimensions, $\textit{result surface}\times\textit{delivery obligations}$. The surface names what the consumer API exposes: complete logits, log probabilities, token IDs, or another registered projection. Four delivery obligations constrain how that result arrives. A \emph{provenance-preserving obligation} requires the mapped producer's value. A \emph{substitution obligation} permits a representative when graph lineage places the producers in one downstream-substitutable class. A \emph{completion obligation} reduces an unfinished partial before a complete-value observation. An \emph{authority-preserving obligation} permits only the runtime-selected owner to source a decision or persistent state. Deterministic token-only sampling derives the token surface by applying argmax to complete logits; lowering chooses whether that projection runs before or after the boundary. Complete-value, log-probability, and stochastic-sampling requests retain complete logits because the current contract does not carry RNG state. Projection is a result-surface transformation orthogonal to the four delivery classes evaluated in the normalized matrix.

The result surface and delivery obligation first determine the legal plan set. A fabric-aware selector can then choose among accepted plans using message multiplicity, local collective cost, load, and the execution critical path. Sections~\ref{sec:lowering} and~\ref{sec:evaluation} separate this semantic filtering from runtime selection.

\subsection{Scope of the Current Backend}

The serving adapter realizes replicated forward values, reconstructs complete logits, projects tokens, and returns owner-sourced tokens. A companion Qwen3-14B MLP path uses the same production transport for lane-preserving exact shards and HCCL completion of row-parallel partials. Together, these paths exercise complete-value, projected-result, provenance, substitution, completion, and authority observations on CUDA/CANN engines.

\section{Consumer-Observable Contracts}

\subsection{Boundary Contract}

For each boundary edge, \sys constructs the compile-time contract
\begin{equation}
e=\langle P,\cset,D,K,\Gamma,R,O,Q\rangle .
\label{eq:contract}
\end{equation}
$P$ is the producer set, $\cset$ is the demanded consumer set, and $D$ is the native-domain mapping. $K$ selects the value kind. $\Gamma$ partitions producers into downstream-substitutable classes derived from graph lineage. $R$ records an outstanding reduction and the required completion, $O\subseteq P$ is the authoritative owner set, and $Q$ names the consumer-visible result surface and a registered projection operator, including its input/output types and deterministic tie-breaking semantics. Shape, dtype, tensor layout, lifetime, and framing group accompany the contract.

Autoregressive execution introduces a separate runtime tag
\begin{equation}
\tau=\langle\textit{request},\textit{token},\textit{microbatch},\textit{epoch}\rangle .
\label{eq:tag}
\end{equation}
The contract answers which logical value a consumer may observe; $\tau$ identifies the execution instance to which a delivered frame belongs. Keeping the two concepts separate prevents request lifecycle metadata from becoming part of the core communication abstraction.

The frontend compiles graph structure and runtime APIs into this representation. Tensor-parallel AllReduce lineage derives complete replica classes; partial lineage derives the reduction obligation; the request graph derives demanded consumers; and runtime APIs identify result outputs and the decision owner. The three evaluated model partitions use the same derivation path and no model- or configuration-specific lowering branch.

\subsection{Consumer Observation}

The semantic value of an execution is defined at demanded consumers rather than at every intermediate rank. We use observational refinement at a registered communication boundary, with equivalence defined over the demanded observations named by its contract. For consumer $c$, the observation is
\begin{equation}
\obs_e(c)=\langle\textit{surface},\textit{value},\textit{complete},\textit{provenance},\textit{authority}\rangle .
\label{eq:observation}
\end{equation}
A runtime delivery is current only when its received tag matches the expected $\tau$. A plan $p$ is legal if
\begin{equation}
\forall c\in\cset_e,\quad
\obs_p(e,c)\equiv\obs_{\mathrm{ref}}(e,c),
\label{eq:legality}
\end{equation}
where the reference follows the complete layout-directed execution for the requested consumer surface. For a token-only request, both executions are compared after the registered projection; for log probabilities, the surface remains complete logits plus token IDs. Authority additionally requires the decision to originate at its runtime-selected owner.

Equation~\ref{eq:legality} defines observational refinement: intermediate state may change or disappear while demanded consumers retain the same registered result with admissible completeness, provenance, and authority. A representative plan need not reproduce all producer-to-consumer copies, and a projection plan need not reconstruct state absent from the consumer surface.

\subsection{Legality Invariants}

The checker symbolically executes each candidate operation. It initializes producer values as available, tracks surface, completeness, and provenance through transfer and projection, and updates availability at destinations. Reduction combines provenance and discharges completion. Acceptance requires coverage of every demanded consumer, the requested surface, complete values where required, admissible producer provenance, substitution only within a derived class, discharge of outstanding reductions, an authorized source for every decision, and native collectives confined to one domain.

The checker rejects unavailable sources, missing consumers, cross-domain operations mislabeled as local, unfinished reductions, invalid provenance, and non-owner decision sources. The runtime separately rejects stale request, token, and state-epoch tags at receive boundaries. This division keeps compile-time semantic soundness independent of the details of framed transport.

\subsection{Semantic Inputs}

Optimization-authorizing fields come from the graph or runtime API. A completed tensor-parallel AllReduce establishes a replica class, partial lineage names its reduction, the request graph supplies active consumers, and the serving API supplies requested outputs, deterministic-sampling mode, and the decision owner. Runtime byte-equality gates validate evaluated executions but do not authorize lowering. If a required fact is unavailable, the frontend emits the corresponding layout-materializing plan; log-probability requests therefore retain complete-logits redistribution.

\subsection{Contract Construction and Static Validation}

A boundary contract is built at the integration point between model partitioning and transport. Endpoint groups identify producer and consumer lanes and their native collective domains. Graph placement supplies lane-preserving, replicated, or partial relations; request/runtime metadata supplies demand, result surface, sampling mode, and owner selection. Shape and dtype determine both the materialized footprint and the projected result type.

Reduction obligations name an elementwise operator and source set. The executable reference supports sum, product, minimum, and maximum; the live Qwen3 path uses HCCL sum AllReduce. Lowering completes the registered operator before exposing a full value.

Static validation checks endpoint membership, domain assignment, mapped sources, owner uniqueness, request identity, reduction lineage, observation compatibility, and projection preconditions before lowering.

The frontend then materializes an immutable edge object with field-level provenance. Graph- and runtime-derived facts enable specialized rules, validation observations remain non-authorizing evidence, and missing facts select the reference obligation. Placement chooses endpoints and layouts; the contract compiler derives the consumer observation; and \sys verifies that the communication graph preserves both.

\subsection{Soundness Argument}

\begin{proposition}[Observation-preserving refinement]
Assume valid graph/runtime facts, correct implementations of the registered projection semantics, correct native-domain collectives, and reliable tagged delivery. Every communication plan accepted by \sys preserves the conservative observation of every demanded consumer.
\end{proposition}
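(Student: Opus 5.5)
The argument is a simulation (invariant-preservation) proof that links the checker's symbolic state to the concrete runtime state, followed by a case split on the four delivery obligations and the result surface. First we fix an abstract state per endpoint and buffer: a tuple (surface, complete flag, provenance set, authority source, domain). For each concrete buffer value $v$ we define a concretization relation stating that $v$ equals the reference value restricted to that symbolic description. Concretely, if the abstract state records provenance set $S$, completeness $b$ and surface $q$, then $v$ is $Q_q$ applied to the $R$-reduction over $S$ of the producer values, and when $b$ holds $S$ covers the reduction's source set. Initially every producer buffer is available, carries its own provenance, and has the materialized surface. Under the assumption of valid graph/runtime facts, each initial buffer therefore satisfies the relation.

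The core lemma states that each operation kind the checker symbolically executes preserves the relation. It is proved by induction over the plan's operation sequence, with one case per operation kind.
\begin{itemize}
\item \emph{Cross-domain transfer.} Reliable tagged delivery gives that the frame received under the expected $\tau$ is bitwise the sent value, so the abstract copy of surface, provenance and completeness is sound. Stale frames are rejected at the receive boundary, so only current tags are observed.
\item \emph{Native collective.} The checker accepts it only when all participants share one $D$-domain. The assumption of correct native-domain collectives then gives exactly the AllGather, Broadcast or AllReduce semantics, and for reduction the resulting provenance is the union of the inputs.
\item \emph{Projection.} The checker requires the input to be complete and of the registered input type. Correct implementation of the registered projection, including deterministic tie-breaking, gives $Q(v)$, which equals applying $Q$ after the boundary in the reference plan.
\item \emph{Reduction.} It discharges $R$ by the algebraic semantics of the registered elementwise operator.
\end{itemize}

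Second, the acceptance conditions are applied at termination. For every $c\in\cset_e$, coverage gives an available buffer. Its abstract surface equals the requested $Q$, and completeness holds where $\obs_{\mathrm{ref}}$ demands it. Provenance is treated by case. Under a provenance-preserving obligation the mapped producer is in $S$ and no other producer's value substitutes for it. Under a substitution obligation the representative lies in one class of $\Gamma$, and because $\Gamma$ is derived from completed AllReduce lineage, all members hold equal values; the representative's value therefore equals what the reference delivers. Under authority, the source lies in $O$, which is the runtime-selected owner. Combining these cases with the concretization relation gives $\obs_p(e,c)\equiv\obs_{\mathrm{ref}}(e,c)$ componentwise, which is Equation~\ref{eq:legality}. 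The conclusion is stated for the \emph{conservative} observation because, whenever a fact is missing, the frontend falls back to the layout-materializing reference obligation, and that case holds trivially.

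The main obstacle is the substitution and projection step: showing that abstract equalities ("same class", "same surface") imply concrete equality in the reference execution. Here I would state explicitly that valid graph facts mean every producer in a $\Gamma$-class is bitwise equal after the completed AllReduce. I would also state that the reference applies the same deterministic projection after reconstruction, so the two projected values coincide because the argmax input is identical and the tie-breaking is fixed. A further subtle point is ordering, namely that a projection or transfer never reads a buffer before its reduction completes. This is handled by having the induction invariant hold per program point in the checker's sequential symbolic order, and by relying on the runtime's tag matching to serialize the corresponding concrete deliveries.
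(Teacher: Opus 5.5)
Your proposal is correct and is essentially the paper's own argument. You start from available producer and owner states, show by induction over the checked operation sequence that transfers, representative transfers, domain-confined native collectives, reductions, and projections preserve value, surface, completeness, and provenance, and then use the coverage, ownership, and domain checks to obtain Equation~\ref{eq:legality}, with tag matching fixing the execution instance; your explicit concretization relation only makes precise the symbolic-to-concrete link that the paper's sketch leaves implicit. Three points need tightening, none fatal.

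\begin{itemize}
\item The relation must also describe lane slices and gathered concatenations, because the layout-directed plans use slicing plus AllGather rather than an $R$-reduction.
\item In the substitution case, the representative must lie in the same $\Gamma$-class as the consumer's mapped producer; membership in \emph{some} class of a partition is vacuous.
\item Intra-plan ordering comes from executing the operations in dependency order; tag matching only separates execution instances.
\end{itemize}
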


\noindent\textit{Proof sketch.}
Initially, every producer or authoritative owner has one available value with its own provenance and surface. The checker permits an operation to read only available sources. A point-to-point transfer preserves value, completeness, and provenance. A representative transfer requires compatible derived classes. A reduction consumes the registered producers, combines provenance, and marks the result complete. A projection applies the operator named by $Q$ to complete producer state, preserving its registered input, output, and tie-breaking semantics. Ownership checks restrict decision sources, domain checks confine native collectives, and coverage makes every $c\in\cset$ available. Induction over the operation sequence establishes Equation~\ref{eq:legality}; runtime tag matching selects the execution instance. \hfill$\square$

Section~\ref{sec:evaluation} tests these invariants through differential reference executions, targeted invalid mutations, and live result-observation checks.

\section{Checked Semantic Lowering}
\label{sec:lowering}

\subsection{Lowering and Checking Procedure}

Algorithm~\ref{alg:lowering} formalizes the compiler pipeline. It separates completion, result-surface transformation, source selection, and cross-domain delivery into ordered passes. The lowerer emits backend-neutral operations, and the checker enforces coverage, surface, completion, provenance, authority, and native-domain locality before backend execution. The complete rule matrix appears in the Supplemental Material.

\begin{algorithm}[t]
  \caption{Contract-guided lowering and legality checking}
  \label{alg:lowering}
  \small
  \begin{algorithmic}[1]
    \Require boundary edge $b$, graph facts $G$, runtime facts $A$
    \Ensure checked logical plan $p$, or a rejected contract
    \State $e \gets \Call{BuildContract}{b,G,A}$
    \State $\Call{ValidateContract}{e}$
    \State $p \gets [\,]$; $q \gets [\,]$; $v \gets \Call{ProducerState}{e}$
    \If{$e$ has an outstanding completion obligation}
      \State $(p,v) \gets \Call{Complete}{p,v,e.R}$
    \EndIf
    \If{$e.Q$ is token-only and $\Call{CanProject}{e}$}
      \State $(p,v) \gets \Call{ProjectAtProducer}{p,v,e.Q}$
    \ElsIf{$e.Q$ includes a consumer projection}
      \State $(p,v) \gets \Call{MaterializeComplete}{p,v,e}$
      \State $q \gets \Call{ProjectAtConsumer}{e.Q}$
    \Else
      \State $(p,v) \gets \Call{MaterializeSurface}{p,v,e.Q}$
    \EndIf
    \If{$e$ requires a unique authoritative source}
      \State $S \gets e.O$
    \ElsIf{$e$ provides a derived substitution class}
      \State $S \gets$ one representative per class and remote domain
    \Else
      \State $S \gets$ the mapped producer for each demanded consumer
    \EndIf
    \State $p \gets \Call{DeliverByDomain}{p,v,S,e.\cset,e.D}$
    \State $p \gets p + q$
    \State \Return $p$ if $\Call{Check}{p,e}$ accepts; otherwise reject
  \end{algorithmic}
\end{algorithm}

For a token-only edge, the specialized plan projects at the producer and transfers the token. The conservative plan redistributes complete logits and applies the same projection at the consumer. Complete-value and log-probability requests retain layout redistribution and AllGather. A replica plan sends one graph-derived representative to each remote domain and broadcasts locally. Exact shards retain their producer mapping, partials complete the registered reduction, and authoritative decisions use the runtime-selected owner.

\subsection{Reference, Layout-Only Baseline, and Semantic Oracle}

The conservative reference materializes the complete layout, applies any requested projection at the consumer, preserves producer mappings and owner constraints, completes required reductions, and delivers to every demanded consumer. The independent Layout-Only baseline accepts only shape, dtype, source/\allowbreak destination meshes, native domains, and Shard/\allowbreak Replicate/\allowbreak Partial placements. It implements documented layout redistribution without importing \sys's demand, observation, authority, lifetime, checker, or cost model. This baseline measures what layout information alone can establish.

An internal capability ladder attributes demand pruning, domain-local multicast, and replica substitution within \sys. The Semantic Oracle instead shares the full semantic IR, candidate operations, checker, and objective; it enumerates the bounded family and provides a ceiling for the production lowerer. A separate byte-only minimizer greedily reduces remote transfer without consumer semantics and supplies legality counterexamples.

\subsection{Bounded Validation Oracle}

The production lowerer is deterministic; the Semantic Oracle evaluates its Pareto and lexicographic optimality by enumerating a finite family of small topologies. The family contains at most four producers, four demanded consumers, and three consumer domains. Endpoint availability is monotonic; cross-domain operations are single-destination point-to-point transfers; local broadcasts are nonempty and stay within one domain; partials may use an all-producer reduction; and programs contain no cyclic delivery, redundant redelivery, arbitrary backend collective, or overlap schedule.

For each legal plan $p$, the oracle records this footprint:
\begin{equation}
F(p)=\langle B_x,N_x,B_l,N_l,N_r,N_o\rangle ,
\end{equation}
where $B_x$ and $N_x$ are cross-domain bytes and startups, $B_l$ and $N_l$ are local bytes and startups, $N_r$ is the number of reductions, and $N_o$ is total operations. The oracle tests whether the specialized plan lies on the Pareto frontier and minimizes this tuple lexicographically. Within this bounded family, the production lowerer is Pareto efficient and lexicographically optimal.

\subsection{Complexity and Operation Coalescing}

Lowering is a deterministic pass over boundary edges. Within an edge, it groups demanded consumers by native domain and, when substitution is available, by producer class. Operations follow a stable endpoint order, and only an accepted plan produces runtime configuration.

Let $|P|$ and $|C|$ be the number of producer and consumer endpoints for an edge. Contract validation and direct mappings require $O(|P|+|C|)$ state. Grouping consumers and producers uses ordered maps and requires $O((|P|+|C|)\log(|P|+|C|))$ time in the current implementation. The checker executes $m$ emitted operations and tracks endpoint state in $O(|P|+|C|+m)$ space, with bounds independent of tensor payload size because lowering manipulates graph metadata rather than model values. The exhaustive oracle is intentionally separate because its search grows combinatorially and is used only for small-family validation.

Graph-level costing can coalesce payloads that share a transport group, operation kind, source set, destination set, and cross-domain flag. Coalescing affects the estimated startup count but not legality: the checker retains every constituent edge and verifies it against its own consumer observation before the packed operation is costed. This distinction prevents a frame-packing optimization from erasing the provenance or completion requirement of an individual edge.

\subsection{Relation to Collective Synthesis}

Collective synthesis and tuning operate after the logical requirement has been selected. SCCL and TACCL generate topology-specific collective algorithms~\cite{cai2021sccl,shah2023taccl}; AutoCCL tunes low-level parameters of a requested NCCL collective~\cite{xu2025autoccl}; HeteCCL synthesizes schedules for heterogeneous links~\cite{hei2026heteccl}; and MSCCL++ exposes portable primitives and a communication DSL~\cite{hwang2026mscclpp}. \sys can emit one representative transfer plus a domain-local broadcast, after which any suitable backend can optimize those remaining operations. Collective synthesis answers how to execute a requirement; \sys first determines whether and what must cross the boundary.

\section{Cross-Stack Realization}

\subsection{Runtime Architecture}

The prototype connects two independent serving engines rather than forming one heterogeneous native process group. The head partition executes on CUDA/NCCL and the tail partition on CANN/HCCL. Each engine retains its vendor runtime for local tensor-parallel collectives. A cross-domain adapter maps accepted logical plans to runtime switches that control forward aggregation, result aggregation, and node-local result fanout. The current mapping requires the hidden and residual forward edges to share one aggregation decision and requires exactly one authoritative decision-return edge.

The implementation builds on a vLLM-style serving engine~\cite{kwon2023pagedattention}. Cross-domain tensor movement uses persistent binary connections, framed messages, demand-allocated pinned-host staging, depth-two sender queues, and inline receive. Forward values and result decisions follow distinct registered paths. The contract frontend derives the consumer observation from graph and request/runtime APIs. After legality checking, the runtime mapper lowers the accepted plan to the forward and result switches. Requests for log probabilities retain the full-logits redistribution path. The complete runtime topology and message formats are detailed in the Supplemental Material.

\subsection{Data, Result, and Control Paths}

The data path carries the hidden state and residual values produced by the head partition. Under shard forwarding, each local TP lane serializes its mapped value for the corresponding remote lane. Under representative forwarding, one source lane serializes the logical value once, a designated remote lane receives it, and the tail engine invokes its native local collective for the remaining demanded lanes. Hidden and residual payloads share one aggregation decision in the current adapter.

The result path is selected from the consumer observation. The native tail output head materializes full logits before either plan. A complete-logits or log-probability obligation slices that tensor by TP lane, transfers the shards, and reconstructs the full value with a destination-local AllGather. For the deterministic sampling requests used in our evaluation, a token-only obligation applies the native sampler to the same tensor and transfers the compact \texttt{int32} IDs. If the contract establishes a unique owner, the mapper sends one owner value per remote domain and uses a native local broadcast. Projection weakens the observed value; authority selects its admissible source.

Persistent connections amortize connection setup across tokens. Each frame contains a fixed header, tensor metadata, execution tag, and payload. Receive operations are bounded by the declared frame size and expected tensor contract. Pinned staging is allocated on demand and returned to a 512-MiB per-process cache after use; the transport reserves no persistent device-buffer pool. The depth-two sender queue bounds in-flight staging, while inline receive avoids the measured overhead of a persistent receiver thread. These mechanisms are implementation choices for the current backend; the logical plan does not depend on sockets, host staging, or a particular vendor API.

The control path coordinates request admission and retirement. Local ranks first agree on continue, admit, or retire. A host acknowledges cross-node retirement after every local rank drains the current collective slot, and the peer acknowledgement advances the shared epoch. This ordering prevents one rank from entering the next request while another remains in the previous local collective.

\subsection{Plan Realization}

All configurations use the same runtime. Figure~\ref{fig:pipeline-swimlane} profiles local, non-additive result-path timers across the two plans during steady decode. Full-logit transfer (R0) slices tail-materialized logits by TP lane, transfers the shards, reconstructs the value with destination-local AllGather, and then applies argmax. Source-side projection (P0) applies the same argmax before transfer and sends one \texttt{int32} result per lane. For forward values, shard forwarding (L0) reconstructs flatten-defined shards with AllGather, while representative forwarding (S0) transfers one complete value and broadcasts it locally. Owner-token delivery (S1) adds owner-only result transfer and local fanout. Contracts contain no configuration identifiers; the runtime settings are generated from the checked plan.

\begin{figure*}[t]
  \centering
  \includegraphics[width=\textwidth]{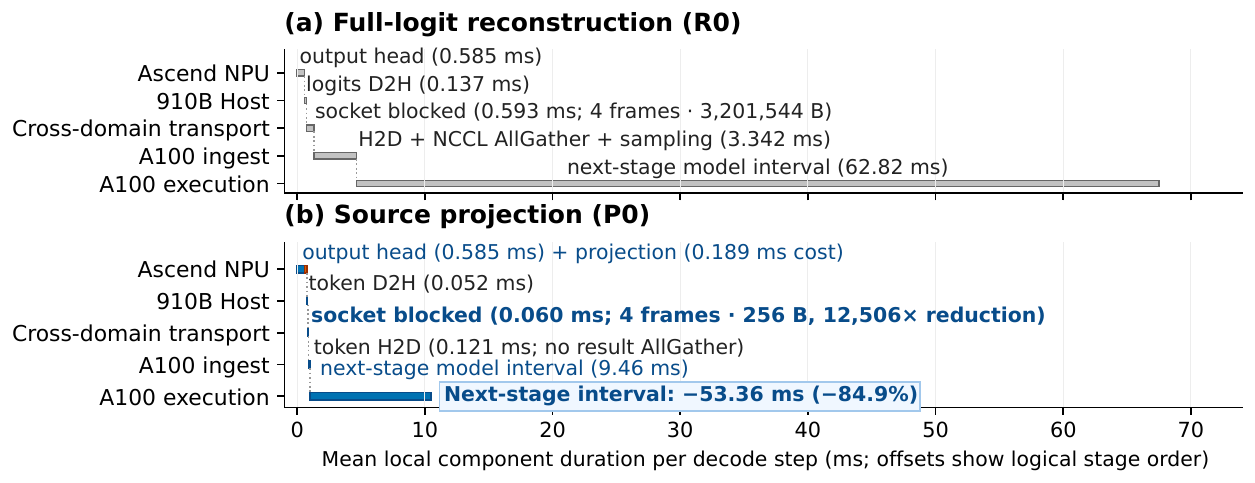}
  \caption{Measured result-path component durations for MiniMax-M2.7 at concurrency 16 on the 1-GbE CUDA/CANN deployment. R0 sends four logit-shard frames totaling 3,201,544\,B per step and performs H2D, NCCL AllGather, and sampling at the A100. P0 adds 0.189\,ms of source projection, sends four token frames totaling 256\,B, and removes result AllGather. Across six restart pairs, the CUDA-stream interval enclosing the subsequent A100 head-stage execution falls from 62.82 to 9.46\,ms. The socket bars measure process-local \texttt{sendall} blocking, not wire-transfer time. Bar widths encode local durations; horizontal offsets indicate logical order; the components are not additive.}
  \Description{Stepped execution swimlane across five hardware lanes comparing full-logit reconstruction with source projection. The bars report separate local timers rather than a summed end-to-end timeline. Source projection collapses cross-domain payload, removes result AllGather, and reduces the measured interval around subsequent A100 head-stage execution from 62.82 to 9.46 milliseconds.}
  \label{fig:pipeline-swimlane}
\end{figure*}

Earlier replica-scaling campaigns use mapped copies (C1), representative broadcast (C0), and representative broadcast with owner return (C5). These configurations retain the original value and frame formats and provide a second view of replica multiplicity. On the TP2 MiniMax graph, they emit forward/result frame pairs of $(2,4)$, $(1,4)$, and $(1,2)$ per logical step. For $k$ graph-derived forward replicas and one remote consumer domain, representative forwarding predicts
\begin{equation}
B_{\mathrm{SemBridge}}\approx\frac{B_{\mathrm{reference}}}{k},
\quad
\mathrm{reduction}\approx 1-\frac{1}{k}.
\label{eq:tp-scaling}
\end{equation}
Equation~\ref{eq:tp-scaling} quantifies cross-domain replica traffic; its end-to-end effect depends on the workload's communication-to-computation ratio.

\subsection{Ordering and Failure Boundary}

Every frame carries request, token, microbatch, and state-epoch fields. Service control uses named continue, admit, and retire states. Before an epoch retires, every local tensor-parallel rank drains its current collective slot; the two hosts then exchange a retirement acknowledgement. These mechanisms prevent a faster rank or host from reusing a collective position while its peer still executes the previous cycle. Transport closure and peer timeouts surface to the service layer as execution errors.

\subsection{Backend Scope and Portability}

The logical IR is backend neutral, while the exercised runtime maps the registered observations onto a CUDA/CANN pair. The serving adapter realizes complete-logits redistribution, token projection, replica forwarding, and owner-sourced decisions. The Qwen3 MLP slice realizes exact shards through lane-preserving point-to-point frames and completes partials with HCCL AllReduce. A backend needs point-to-point delivery, native-domain fanout and reduction, tag validation, projection support, and completion reporting; the logical plan does not depend on socket framing or buffer layout.

\section{Evaluation}
\label{sec:evaluation}

The evaluation follows the abstraction's evidence chain. It first tests whether layout alone establishes consumer observations, then whether checked refinements preserve them on CUDA/CANN engines, and finally whether graph differences affect end-to-end inference.

\subsection{Methodology}

\paragraph{Hardware and software.}
The measured environment contains two physical hosts connected by a dedicated 1-GbE link with a worst-direction p95 TCP-echo RTT of 0.3425\,ms. The CUDA/NCCL host contains four NVIDIA A100 PCIe 40-GB GPUs; the CANN/HCCL host contains four Ascend 910B4-1 accelerators. This topology represents capacity spillover between independently operated accelerator pools over a bandwidth-constrained inter-domain path.

MiniMax-M2.7~\cite{minimax2026m2} is the primary capacity-spillover workload. Qwen3-14B Dense~\cite{qwen2025qwen3} supplies TP scaling, result projection, and the layer-0 MLP slice. Qwen3.5 MoE~\cite{qwen2026qwen35} provides a distinct routing and compute balance under the same compiler rules. Configuration selection precedes the measured runs. For each workload, the protocol fixes output lengths, primary endpoint, AB/BA orders, paired analysis, and inclusion rules before the six measured fresh-process pairs. The MiniMax load sweep holds the token-only API, prompt, and 128-token output fixed while varying concurrency over 1, 8, and 16. All paired configurations use the same checkpoint, requests, process placement, transport, framing, queue depth, and lifecycle within a workload.

\paragraph{Configurations and metrics.}
Full-logit transfer (R0) and source-side projection (P0) satisfy the same token-only request. R0 transfers and reconstructs complete logits before applying the contract argmax; P0 applies the same projection before the boundary. Log-probability requests use full-logit transfer. The forward-path comparison isolates shard transfer plus AllGather (L0) from representative transfer plus Broadcast (S0). Owner-token delivery (S1) adds owner-only result transfer and local fanout. Earlier replica-scaling campaigns use mapped replica copies (C1), representative broadcast (C0), and representative broadcast with owner return (C5). We report output-token throughput, p95 time to first token (TTFT), p95 time per output token (TPOT), p95 total latency, framed bytes, frames, local collectives, and observation checks. Performance estimates use uninstrumented runs; component timers, resource probes, and correctness assertions execute separately.

\paragraph{Statistical units.}
Each AB/BA pair from fresh processes is an independent performance unit; requests within one run are repeated service observations. Every projection-placement and forward/owner comparison contains six pairs, with three per order stratum. For specialized plan $s$ and reference plan $r$, pair $i$ contributes $d_i=100(T_{s,i}/T_{r,i}-1)$. These comparisons report all six effects and the two-sided paired-$t$ interval $\bar d\pm t_{0.975,5}s_d/\sqrt{6}$. The replica-scaling measurements use percentile-bootstrap intervals over six fresh-process pairs. The Supplemental reports every pair effect and both interval estimators for both experiment families. Correctness probes and instrumented runs are excluded from performance estimates.

\begin{figure*}[!t]
  \centering
  \includegraphics[width=\textwidth]{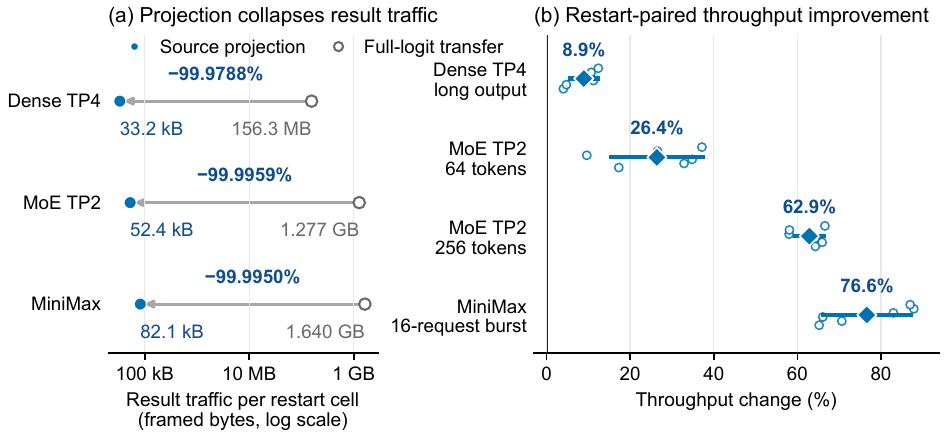}
  \caption{Projection placement in the common runtime. (a) Source-side projection reduces result traffic by more than 99.97\%. (b) Circles show six restart-pair throughput effects; diamonds and bars show means and paired-$t$ 95\% confidence intervals. Forward traffic and result-frame counts match between plans.}
  \Description{Two-panel result plot. Log-scale arrows show result traffic falling from complete logits to token projection for Dense, MoE, and MiniMax. Paired scatter plots and confidence intervals show positive output-token throughput effects in all six restarts for every reported workload.}
  \label{fig:main-results}
\end{figure*}

\subsection{Compiler Comparison with Layout-Only}

The generated matrix contains 72 cases, with 18 each for replica, exact-shard, authority, and partial delivery obligations. It covers TP degrees 1, 2, and 4 and one to three consumer domains. Result projection is orthogonal to these delivery classes and is evaluated separately through compiler checks and the projection-placement experiments. Layout-Only supports every source/destination layout transition and establishes 54 complete application obligations. Table~\ref{tab:compiler} separates footprint reductions on this common semantic subset from the authority information absent from layout.

\begin{table}[t]
  \centering
  \caption{Comparison with the independent Layout-Only baseline across 72 cases. ``Established'' counts complete application obligations derived from layout inputs. Byte and startup columns count cases with lower \sys footprints; ``Rejected'' counts semantically invalid candidates proposed by the byte-only minimizer and rejected by \sys.}
  \label{tab:compiler}
  \footnotesize
  \setlength{\tabcolsep}{2.5pt}
  \begin{tabular}{l r r r r r}
    \toprule
    Obligation & Cases & Established & Bytes $\downarrow$ & Startups $\downarrow$ & Rejected \\
    \midrule
    Replica   & 18 & 18 & 13 & 15 & 0 \\
    Exact shard & 18 & 18 & 7 & 7 & 14 \\
    Partial   & 18 & 18 & 10 & 13 & 18 \\
    \cmidrule(lr){1-6}
    Layout subtotal & 54 & 54 & 30 & 35 & 32 \\
    Authority residual & 18 & 0 & 10 & 13 & 8 \\
    \midrule
    All       & 72 & 54 & 40 & 48 & 40 \\
    \bottomrule
  \end{tabular}
\end{table}

Across the 54 layout-established obligations, \sys uses fewer bytes in 30 cases and fewer startups in 35. Layout alone cannot establish the owner in the 18 authority cases; the checked owner constraint produces lower byte and startup footprints in 10 and 13 cases. On the nine real headline edges, \sys and the layout-only baseline select different plan shapes in every case. \sys reduces startups on all nine and bytes on the three result edges. The shared-IR oracle matches \sys on all 72 generated cases and all 13 real edges, showing that the deterministic lowerer reaches the best footprint in its bounded candidate family. The oracle enumerates 69,322 checker-accepted programs, and 504 legal programs execute against the reference semantics. The byte-only minimizer proposes semantically invalid candidates in 14 exact-shard, 18 partial, and eight authority cases; \sys rejects all 40. All eight targeted mutations of demand, reduction, substitution, ownership, domain locality, and execution identity are rejected.

The common-runtime studies validate these plan differences under live execution. On Dense TP4, shard transfer and representative transfer move the same 14,192,640 payload bytes per fresh-process run. Representative transfer reduces forward frames from 468 to 117 and improves throughput by 1.48\% [0.60\%, 2.36\%] across all six pairs. On Qwen3.5 MoE, the same forward rewrite changes output-256 throughput by $-3.14\%$ [$-5.64\%$, $-0.64\%$]. Owner-token delivery instead improves throughput by 7.98\% [0.53\%, 15.42\%] at output-64 and 6.09\% [2.37\%, 9.81\%] at output-256. Dense benefits from fewer forward startups; MoE benefits from authority-aware result delivery.

Contract compilation, lowering, and checking remain off the token path. Across the model contracts, p95 compilation time is 0.051--0.056\,ms on the A100 host and 0.114--0.130\,ms on the Ascend host.

\subsection{Projection Placement Changes End-to-End Execution}

Full-logit transfer and source-side projection satisfy the same token-only surface, retain identical forward traffic, and use the same number of result frames. The former transfers complete logit shards, invokes result AllGather, and applies argmax at the consumer; the latter applies the same argmax before transfer. Figure~\ref{fig:main-results} places this boundary transformation beside its restart-paired service effect.

The Dense and MoE campaigns complete all 24 fresh-process runs and 288 requests. Source-side projection removes 130 and 332 result AllGathers, respectively, and improves p95 TPOT by 10.46\% on Dense and 41.95\% on the longer MoE output.

The MiniMax load sweep contains six fresh-process pairs at each concurrency. Source-side projection improves throughput by 18.64\% [17.46\%, 19.82\%], 80.20\% [77.12\%, 83.28\%], and 76.63\% [65.61\%, 87.65\%] at concurrency 1, 8, and 16; all 18 pair effects are positive. The corresponding p95 TPOT reductions are 17.27\%, 54.43\%, and 56.24\%, while p95 total latency falls by 15.71\%, 44.54\%, and 43.29\%.

Full-logit runs reconstruct the transferred logits bitwise, and source projection produces identical tokens across all three model families. A live surface-switching run composes result surfaces with delivery obligations through one compiler and CUDA/CANN runtime, without plan identifiers in the contracts. Log-probability requests select full-logit reconstruction (36 frames, 13,609,032\,B) and return all 16 requested positions. Token-only requests select source projection with mapped delivery (20 frames, 1,232\,B), while owner-scoped requests compose the token surface with unique authority and select owner-token delivery (5 frames, 308\,B) with rank~3/lane~0 as the sole sender. Both token plans match 40/40 consumers.

\subsection{Exact Shards and Partial Completion on Real Backends}

The MLP down projection is the canonical tensor-parallel completion boundary: each lane produces an incomplete full-width output that must be summed before a complete downstream observation. The layer-level experiment partitions the gate and up projections of the real Qwen3-14B layer-0 MLP across two A100 GPUs. Each rank produces a distinct 8704-column BF16 activation shard and sends it to the matching Ascend rank through the production framed transport. The two Ascend ranks apply row-parallel down projections, producing distinct partials, then complete the sum with HCCL AllReduce.

\begin{table}[t]
  \centering
  \caption{Qwen3-14B layer-0 MLP correctness. The 8- and 64-token cases were held out from error-gate selection.}
  \label{tab:semantic-slice}
  \footnotesize
  \setlength{\tabcolsep}{2.0pt}
  \begin{tabular}{@{}r c c r r@{}}
    \toprule
    Tokens & \begin{tabular}[c]{@{}c@{}}Shard\\delivery\end{tabular} & \begin{tabular}[c]{@{}c@{}}Partial\\completion\end{tabular} & \begin{tabular}[c]{@{}c@{}}Relative\\L2 (\%)\end{tabular} & \begin{tabular}[c]{@{}c@{}}Max error\\/ peak (\%)\end{tabular} \\
    \midrule
    1  & bitwise & distinct $\rightarrow$ equal & 0.207 & 0.164 \\
    8  & bitwise & distinct $\rightarrow$ equal & 0.226 & 0.324 \\
    64 & bitwise & distinct $\rightarrow$ equal & 0.244 & 0.451 \\
    \bottomrule
  \end{tabular}
\end{table}

For every token count, each received activation matches its sender bitwise, the two activation lanes differ, the pre-reduction partials differ, and the two completed outputs are identical. Relative L2 error against a full FP32 down-projection reference is 0.207--0.244\%, establishing live exact-shard delivery and partial completion on the two vendor stacks.

\subsection{Legal Plans Have Workload-Dependent Effects}

Figure~\ref{fig:replica-scaling} tests Equation~\ref{eq:tp-scaling} without changing the result surface. Representative broadcast has no replica advantage at TP1, reduces host traffic by 50.6--52.5\% at TP2, and by 75.2--76.2\% at TP4. The MiniMax variants with owner return realize forward/result frame pairs $(2,4)$, $(1,4)$, and $(1,2)$ per step.

\begin{figure}[t]
  \centering
  \includegraphics[width=\columnwidth]{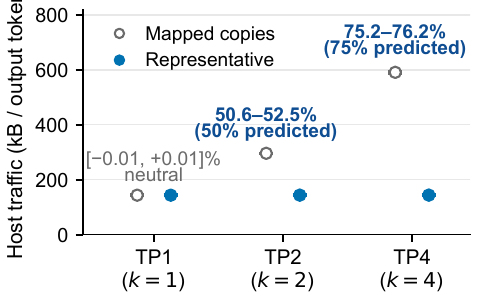}
  \caption{Replica-aware traffic scaling. Host traffic per output token for mapped copies and representative broadcast; ranges cover directions and restart blocks.}
  \Description{Mapped-copy traffic increases with tensor-parallel degree, while representative-broadcast traffic remains nearly constant from TP1 through TP4.}
  \label{fig:replica-scaling}
\end{figure}

Across six independent fresh-process pairs, representative broadcast raises throughput by 38.57\% [36.89\%, 40.07\%] on Dense TP4 long output, 13.68\% [10.15\%, 17.82\%] on a 16-request MiniMax burst, and 19.17\% [13.20\%, 25.23\%] on MoE with long output at concurrency 8. The steady-low MiniMax control is practically neutral at $+0.54\%$, Dense TP1 long output averages $-1.87\%$ without replica multiplicity, and the 16-request MoE burst averages $-10.84\%$ with an interval crossing zero. Together, these measurements show how \sys exposes semantically valid alternatives whose service benefit follows the workload critical path.

\paragraph{Resource footprint.}
P0 leaves peak allocated and reserved device memory, KV blocks, maximum batch size, and concurrency unchanged from R0. It lowers pinned-memory high-water by 37.40\,MiB on the A100 host and 74.79\,MiB on the Ascend host. Neither plan allocates a persistent transport device-buffer pool, and the sender queue high-water remains unchanged.

\section{Related Work}

\subsection{Placement and Distributed Tensors}

Distributed-tensor systems separate a global program from device-level realization. Mesh-TensorFlow, GShard, GSPMD, DistIR, and PartIR express or propagate placement and sharding decisions~\cite{shazeer2018mesh,lepikhin2021gshard,xu2021gspmd,santhanam2021distir,alabed2025partir}. OneFlow SBP and PyTorch DTensor represent split or shard, broadcast or replicate, and partial states~\cite{yuan2021oneflow,pytorch2026dtensor}, while cross-mesh resharding optimizes layout transfer~\cite{zhuang2023optimizing}. \sys starts after placement: its contract determines which observations must cross independent vendor domains.

TrainVerify verifies parallelization equivalence between a logical model and a distributed training plan through symbolic dataflow graphs and solver-backed reasoning; Scalify checks production graph transformations through equality saturation and relational layout analysis~\cite{lu2025trainverify,zulkifli2025verifying}. These systems establish whole-graph equivalence. \sys checks consumer observations at a mixed-vendor communication boundary and lowers each certified obligation into a communication graph.

\subsection{Heterogeneous LLM Serving}

PagedAttention and Sarathi-Serve improve memory management and scheduling~\cite{kwon2023pagedattention,agrawal2024sarathi}; Splitwise and DistServe separate prefill from decode~\cite{patel2024splitwise,zhong2024distserve}; and HexGen, HexGen-2, and Helix optimize heterogeneous placement or routing~\cite{jiang2024hexgen,jiang2025hexgen2,mei2025helix}. These systems decide where computation and requests run. \sys instead asks which boundary values and copies the fixed placement requires.

Production pipeline runtimes also communicate compact sampler outputs in selected homogeneous paths. The vLLM GPU runner broadcasts sampled token IDs across pipeline ranks, and the vLLM-Ascend sampling design gathers compact sampler outputs rather than complete logits~\cite{vllm2026ppoutput,vllmascend2026sampling}. A path-specific token handoff realizes the same P0 graph once eligibility is known. \sys derives that eligibility from the requested surface, retains R0 for log-probability requests, and composes the token result surface with completion and authority obligations across separate CUDA/NCCL and CANN/HCCL domains.

\subsection{Collective Synthesis and Mixed-Vendor Communication}

Blink, SCCL, P$^2$, TACCL, and MSCCLang construct or synthesize collective programs~\cite{wang2020blink,cai2021sccl,xie2022reduction,shah2023taccl,cowan2023mscclang}. AutoCCL tunes low-level parameters~\cite{xu2025autoccl}; HeteCCL handles heterogeneous link capacities~\cite{hei2026heteccl}; and MSCCL++ and HetCCL provide portable or mixed-vendor realization~\cite{hwang2026mscclpp,wang2026hetccl}. These systems implement or optimize a logical requirement. \sys selects the consumer-observable requirement that reaches them.

\section{Discussion}

\subsection{Compiler Boundary}

Placement fixes where values reside, \sys specifies what each consumer must observe, and collective backends determine how the checked logical graph executes. The contract makes the result surface explicit before transport: token-only requests produce a projection operation, log-probability requests retain complete logits, replica lineage enables representative transfer, and decision ownership selects the source. The checked plan then maps to ordinary transfers and native collectives. Full-logit transfer and source-side projection use identical forward traffic and result-frame counts, so their end-to-end difference comes from this compiler decision within one transport implementation.

\subsection{Performance and Fabric-Aware Selection}

In the measured capacity-spillover deployment, projection reduces result traffic by 99.9788--99.9959\%, raises throughput by 8.92--80.20\%, and lowers p95 total latency by 8.12--44.54\% across the evaluated model and load points. Every MiniMax restart pair favors source projection at concurrency 1, 8, and 16. Representative delivery separately reduces TP4 traffic by 75.2--76.2\% and raises throughput by 13.68--38.57\% on communication-sensitive workloads.

Compilation first uses the contract and checker to form the set of legal plans; a fabric-aware selector can then choose among them. Let $B_i,F_i$ be plan $i$'s cross-domain payload bytes and frames, $\beta$ the effective end-to-end payload rate, $\alpha$ the effective non-overlapped per-frame startup, and $L_i$ the plan's non-overlapped projection and local-collective cost. The effective payload rate includes device-to-host serialization, framing, network transfer, and host-to-device ingestion; local reconstruction and projection contribute to $L_i$. A specialized plan $s$ improves over reference $r$ when
\begin{equation}
  \frac{B_r-B_s}{\beta}+\alpha(F_r-F_s)>L_s-L_r.
  \label{eq:crossover}
\end{equation}
Byte and frame reductions strengthen the left-hand side; projection and native collectives determine the local-cost difference. The Dense and MoE forward/owner results show that two legal rewrites can occupy different critical paths. Restart pairs measure the net outcome, while the non-additive component timers identify transfer and local work. Equation~\ref{eq:crossover} provides a decision criterion for checker-accepted plans: transfer savings must exceed projection and local-collective cost.

\section{Conclusion}

SemBridge makes consumer observations explicit at cross-stack communication boundaries through a typed contract, deterministic lowerer, and symbolic checker. Its compiler/runtime path derives and executes full-logit reconstruction, source projection, owner-token delivery, representative delivery, exact-shard transfer, and partial completion. Across three model families, token projection reduces result traffic by more than 99.97\% and improves throughput throughout the measured MiniMax load range.

\section*{Acknowledgments}
Generative AI tools were used to assist with drafting and revising portions of the manuscript text. The authors reviewed and verified all generated content and take full responsibility for the final manuscript.

\bibliographystyle{ACM-Reference-Format}
\bibliography{references}

@inproceedings{alabed2025partir,
  author    = {Sami Alabed and Daniel Belov and Bart Chrzaszcz and Juliana Franco and Dominik Grewe and Dougal Maclaurin and James Molloy and Tom Natan and Tamara Norman and Xiaoyue Pan and Adam Paszke and Norman A. Rink and Michael Schaarschmidt and Timur Sitdikov and Agnieszka Swietlik and Dimitrios Vytiniotis and Joel Wee},
  title     = {{PartIR}: Composing {SPMD} Partitioning Strategies for Machine Learning},
  booktitle = {Proceedings of the 30th ACM International Conference on Architectural Support for Programming Languages and Operating Systems, Volume 1},
  pages     = {794--810},
  year      = {2025},
  publisher = {ACM},
  doi       = {10.1145/3669940.3707284},
  url       = {https://doi.org/10.1145/3669940.3707284}
}

@inproceedings{zhuang2023optimizing,
  author    = {Yonghao Zhuang and Lianmin Zheng and Zhuohan Li and Eric P. Xing and Qirong Ho and Joseph E. Gonzalez and Ion Stoica and Hao Zhang and Hexu Zhao},
  title     = {On Optimizing the Communication of Model Parallelism},
  booktitle = {Proceedings of Machine Learning and Systems},
  pages     = {524--540},
  volume    = {5},
  year      = {2023},
  publisher = {Curran Associates},
  url       = {https://proceedings.mlsys.org/paper_files/paper/2023/hash/a42cbafcabb6dc7ce77bfe2e80f5c772-Abstract-mlsys2023.html}
}

@inproceedings{zheng2022alpa,
  author    = {Lianmin Zheng and Zhuohan Li and Hao Zhang and Yonghao Zhuang and Zhifeng Chen and Yanping Huang and Yida Wang and Yuanzhong Xu and Danyang Zhuo and Eric P. Xing and Joseph E. Gonzalez and Ion Stoica},
  title     = {Alpa: Automating Inter- and {Intra-Operator} Parallelism for Distributed Deep Learning},
  booktitle = {16th USENIX Symposium on Operating Systems Design and Implementation (OSDI 22)},
  pages     = {559--578},
  year      = {2022},
  publisher = {USENIX Association},
  isbn      = {978-1-939133-28-1},
  url       = {https://www.usenix.org/conference/osdi22/presentation/zheng-lianmin}
}

@article{xu2021gspmd,
  author  = {Yuanzhong Xu and HyoukJoong Lee and Dehao Chen and Blake Hechtman and Yanping Huang and Rahul Joshi and Maxim Krikun and Dmitry Lepikhin and Andy Ly and Marcello Maggioni and Ruoming Pang and Noam Shazeer and Shibo Wang and Tao Wang and Yonghui Wu and Zhifeng Chen},
  title   = {{GSPMD}: General and Scalable Parallelization for {ML} Computation Graphs},
  journal = {CoRR},
  volume  = {abs/2105.04663},
  year    = {2021},
  doi     = {10.48550/arXiv.2105.04663},
  url     = {https://arxiv.org/abs/2105.04663}
}

@inproceedings{shazeer2018mesh,
  author    = {Noam Shazeer and Youlong Cheng and Niki Parmar and Dustin Tran and Ashish Vaswani and Penporn Koanantakool and Peter Hawkins and HyoukJoong Lee and Mingsheng Hong and Cliff Young and Ryan Sepassi and Blake Hechtman},
  title     = {{Mesh-TensorFlow}: Deep Learning for Supercomputers},
  booktitle = {Advances in Neural Information Processing Systems},
  volume    = {31},
  pages     = {10435--10444},
  year      = {2018},
  url       = {https://proceedings.neurips.cc/paper/2018/hash/3a37abdeefe1dab1b30f7c5c7e581b93-Abstract.html}
}

@inproceedings{jia2019flexflow,
  author    = {Zhihao Jia and Matei Zaharia and Alex Aiken},
  title     = {Beyond Data and Model Parallelism for Deep Neural Networks},
  booktitle = {Proceedings of Machine Learning and Systems},
  volume    = {1},
  pages     = {1--13},
  year      = {2019},
  url       = {https://proceedings.mlsys.org/paper_files/paper/2019/hash/b422680f3db0986ddd7f8f126baaf0fa-Abstract.html}
}

@inproceedings{santhanam2021distir,
  author    = {Keshav Santhanam and Siddharth Krishna and Ryota Tomioka and Andrew Fitzgibbon and Tim Harris},
  title     = {{DistIR}: An Intermediate Representation for Optimizing Distributed Neural Networks},
  booktitle = {Proceedings of the 1st Workshop on Machine Learning and Systems},
  pages     = {15--23},
  year      = {2021},
  publisher = {ACM},
  doi       = {10.1145/3437984.3458829},
  url       = {https://doi.org/10.1145/3437984.3458829}
}

@inproceedings{jia2022whale,
  author    = {Xianyan Jia and Le Jiang and Ang Wang and Wencong Xiao and Ziji Shi and Jie Zhang and Xinyuan Li and Langshi Chen and Yong Li and Zhen Zheng and Xiaoyong Liu and Wei Lin},
  title     = {Whale: Efficient Giant Model Training over Heterogeneous {GPUs}},
  booktitle = {2022 USENIX Annual Technical Conference (USENIX ATC 22)},
  pages     = {673--688},
  year      = {2022},
  publisher = {USENIX Association},
  isbn      = {978-1-939133-29-57},
  url       = {https://www.usenix.org/conference/atc22/presentation/jia-xianyan}
}

@inproceedings{mei2025helix,
  author    = {Yixuan Mei and Yonghao Zhuang and Xupeng Miao and Juncheng Yang and Zhihao Jia and Rashmi Vinayak},
  title     = {Helix: Serving Large Language Models over Heterogeneous {GPUs} and Network via Max-Flow},
  booktitle = {Proceedings of the 30th ACM International Conference on Architectural Support for Programming Languages and Operating Systems, Volume 1},
  pages     = {586--602},
  year      = {2025},
  publisher = {ACM},
  doi       = {10.1145/3669940.3707215},
  url       = {https://doi.org/10.1145/3669940.3707215}
}

@inproceedings{jiang2024hexgen,
  author    = {Youhe Jiang and Ran Yan and Xiaozhe Yao and Yang Zhou and Beidi Chen and Binhang Yuan},
  title     = {{HexGen}: Generative Inference of Large Language Model over Heterogeneous Environment},
  booktitle = {Proceedings of the 41st International Conference on Machine Learning},
  series    = {Proceedings of Machine Learning Research},
  volume    = {235},
  pages     = {21946--21961},
  year      = {2024},
  publisher = {PMLR},
  url       = {https://proceedings.mlr.press/v235/jiang24f.html}
}

@inproceedings{jiang2025hexgen2,
  author    = {Youhe Jiang and Ran Yan and Binhang Yuan},
  title     = {{HexGen-2}: Disaggregated Generative Inference of {LLMs} in Heterogeneous Environment},
  booktitle = {International Conference on Learning Representations},
  year      = {2025},
  url       = {https://proceedings.iclr.cc/paper_files/paper/2025/hash/0b941a1e5fbce23fe46b049999d04ed0-Abstract-Conference.html}
}

@inproceedings{shah2023taccl,
  author    = {Aashaka Shah and Vijay Chidambaram and Meghan Cowan and Saeed Maleki and Madan Musuvathi and Todd Mytkowicz and Jacob Nelson and Olli Saarikivi and Rachee Singh},
  title     = {{TACCL}: Guiding Collective Algorithm Synthesis using Communication Sketches},
  booktitle = {20th USENIX Symposium on Networked Systems Design and Implementation (NSDI 23)},
  pages     = {593--612},
  year      = {2023},
  publisher = {USENIX Association},
  isbn      = {978-1-939133-33-5},
  url       = {https://www.usenix.org/conference/nsdi23/presentation/shah}
}

@inproceedings{xu2025autoccl,
  author    = {Guanbin Xu and Zhihao Le and Yinhe Chen and Zhiqi Lin and Zewen Jin and Youshan Miao and Cheng Li},
  title     = {{AutoCCL}: Automated Collective Communication Tuning for Accelerating Distributed and Parallel {DNN} Training},
  booktitle = {22nd USENIX Symposium on Networked Systems Design and Implementation (NSDI 25)},
  pages     = {667--683},
  year      = {2025},
  publisher = {USENIX Association},
  isbn      = {978-1-939133-46-5},
  url       = {https://www.usenix.org/conference/nsdi25/presentation/xu-guanbin}
}

@inproceedings{hei2026heteccl,
  author    = {Chenyang Hei and Fuliang Li and Jiayi Li and Jiamin Cao and Chengxi Gao and Xiuzhu Sha and Tongrui Liu and Dengke Zhang and Ennan Zhai and Xingwei Wang},
  title     = {{HeteCCL}: Synthesizing Near-Optimal Collective Communication Schedules for Heterogeneous {GPU} Clusters},
  booktitle = {23rd USENIX Symposium on Networked Systems Design and Implementation (NSDI 26)},
  pages     = {2533--2551},
  year      = {2026},
  publisher = {USENIX Association},
  isbn      = {978-1-939133-54-0},
  url       = {https://www.usenix.org/conference/nsdi26/presentation/hei}
}

@article{wang2026hetccl,
  author  = {Yuejie Wang and Tao Chang and Yuanyuan Zhao and Yulong Ao and Zeyu Gu and Zhiyu Li and Yanmin Jia and Yan Zhang and Mingjun Zhang and He Liu and Yongzhe He and Yonghua Lin and Guyue Liu},
  title   = {{HetCCL}: Enabling Collective Communication for Mixed-Vendor Heterogeneous Clusters},
  journal = {CoRR},
  volume  = {abs/2605.31000},
  year    = {2026},
  doi     = {10.48550/arXiv.2605.31000},
  url     = {https://arxiv.org/abs/2605.31000}
}

@inproceedings{cai2021sccl,
  author    = {Zixian Cai and Zhengyang Liu and Saeed Maleki and Madanlal Musuvathi and Todd Mytkowicz and Jacob Nelson and Olli Saarikivi},
  title     = {Synthesizing Optimal Collective Algorithms},
  booktitle = {Proceedings of the 26th ACM SIGPLAN Symposium on Principles and Practice of Parallel Programming},
  pages     = {62--75},
  year      = {2021},
  publisher = {ACM},
  doi       = {10.1145/3437801.3441620},
  url       = {https://doi.org/10.1145/3437801.3441620}
}

@inproceedings{wang2020blink,
  author    = {Guanhua Wang and Shivaram Venkataraman and Amar Phanishayee and Nikhil Devanur and Jorgen Thelin and Ion Stoica},
  title     = {Blink: Fast and Generic Collectives for Distributed {ML}},
  booktitle = {Proceedings of Machine Learning and Systems},
  volume    = {2},
  pages     = {172--186},
  year      = {2020},
  url       = {https://proceedings.mlsys.org/paper_files/paper/2020/hash/cd3a9a55f7f3723133fa4a13628cdf03-Abstract.html}
}

@inproceedings{hwang2026mscclpp,
  author    = {Changho Hwang and Peng Cheng and Roshan Dathathri and Abhinav Jangda and Saeed Maleki and Madan Musuvathi and Olli Saarikivi and Aashaka Shah and Ziyue Yang and Binyang Li and Caio Rocha and Qinghua Zhou and Mahdieh Ghazimirsaeed and Sreevatsa Anantharamu and Jithin Jose},
  title     = {{MSCCL++}: Rethinking {GPU} Communication Abstractions for {AI} Inference},
  booktitle = {Proceedings of the 31st ACM International Conference on Architectural Support for Programming Languages and Operating Systems, Volume 2},
  pages     = {1201--1215},
  year      = {2026},
  publisher = {ACM},
  doi       = {10.1145/3779212.3790188},
  url       = {https://doi.org/10.1145/3779212.3790188}
}

@inproceedings{kwon2023pagedattention,
  author    = {Woosuk Kwon and Zhuohan Li and Siyuan Zhuang and Ying Sheng and Lianmin Zheng and Cody Hao Yu and Joseph E. Gonzalez and Hao Zhang and Ion Stoica},
  title     = {Efficient Memory Management for Large Language Model Serving with {PagedAttention}},
  booktitle = {Proceedings of the 29th Symposium on Operating Systems Principles},
  pages     = {611--626},
  year      = {2023},
  publisher = {ACM},
  doi       = {10.1145/3600006.3613165},
  url       = {https://doi.org/10.1145/3600006.3613165}
}

@inproceedings{patel2024splitwise,
  author    = {Pratyush Patel and Esha Choukse and Chaojie Zhang and Aashaka Shah and {\'I\~nigo} Goiri and Saeed Maleki and Ricardo Bianchini},
  title     = {Splitwise: Efficient Generative {LLM} Inference Using Phase Splitting},
  booktitle = {2024 ACM/IEEE 51st Annual International Symposium on Computer Architecture},
  pages     = {118--132},
  year      = {2024},
  publisher = {IEEE},
  doi       = {10.1109/ISCA59077.2024.00019},
  url       = {https://doi.org/10.1109/ISCA59077.2024.00019}
}

@inproceedings{zhong2024distserve,
  author    = {Yinmin Zhong and Shengyu Liu and Junda Chen and Jianbo Hu and Yibo Zhu and Xuanzhe Liu and Xin Jin and Hao Zhang},
  title     = {{DistServe}: Disaggregating Prefill and Decoding for Goodput-Optimized Large Language Model Serving},
  booktitle = {18th USENIX Symposium on Operating Systems Design and Implementation (OSDI 24)},
  pages     = {193--210},
  year      = {2024},
  publisher = {USENIX Association},
  isbn      = {978-1-939133-40-3},
  url       = {https://www.usenix.org/conference/osdi24/presentation/zhong-yinmin}
}

@inproceedings{agrawal2024sarathi,
  author    = {Amey Agrawal and Nitin Kedia and Ashish Panwar and Jayashree Mohan and Nipun Kwatra and Bhargav Gulavani and Alexey Tumanov and Ramachandran Ramjee},
  title     = {Taming Throughput-Latency Tradeoff in {LLM} Inference with {Sarathi-Serve}},
  booktitle = {18th USENIX Symposium on Operating Systems Design and Implementation (OSDI 24)},
  pages     = {117--134},
  year      = {2024},
  publisher = {USENIX Association},
  isbn      = {978-1-939133-40-3},
  url       = {https://www.usenix.org/conference/osdi24/presentation/agrawal}
}

@inproceedings{lepikhin2021gshard,
  author    = {Dmitry Lepikhin and HyoukJoong Lee and Yuanzhong Xu and Dehao Chen and Orhan Firat and Yanping Huang and Maxim Krikun and Noam Shazeer and Zhifeng Chen},
  title     = {{GShard}: Scaling Giant Models with Conditional Computation and Automatic Sharding},
  booktitle = {International Conference on Learning Representations},
  year      = {2021},
  url       = {https://openreview.net/forum?id=qrwe7XHTmYb}
}

@misc{minimax2026m2,
  author       = {{MiniMax}},
  title        = {{MiniMax M2.7}: Early Echoes of Self-Evolution},
  year         = {2026},
  howpublished = {Official model release},
  url          = {https://www.minimax.io/news/minimax-m27-en}
}

@misc{qwen2025qwen3,
  author       = {{Qwen Team}},
  title        = {{Qwen3}: Think Deeper, Act Faster},
  year         = {2025},
  howpublished = {Official model release},
  url          = {https://qwenlm.github.io/blog/qwen3/}
}

@misc{qwen2026qwen35,
  author       = {{Qwen Team}},
  title        = {{Qwen3.5}: Towards Native Multimodal Agents},
  year         = {2026},
  howpublished = {Official model release},
  url          = {https://qwen.ai/blog?id=qwen3.5}
}

@article{yuan2021oneflow,
  author  = {Jinhui Yuan and Xinqi Li and Cheng Cheng and Juncheng Liu and Ran Guo and Shenghang Cai and Chi Yao and Fei Yang and Xiaodong Yi and Chuan Wu and Haoran Zhang and Jie Zhao},
  title   = {{OneFlow}: Redesign the Distributed Deep Learning Framework from Scratch},
  journal = {CoRR},
  volume  = {abs/2110.15032},
  year    = {2021},
  url     = {https://arxiv.org/abs/2110.15032}
}

@misc{pytorch2026dtensor,
  author       = {{PyTorch Contributors}},
  title        = {torch.distributed.tensor: Distributed Tensor Documentation},
  year         = {2026},
  howpublished = {PyTorch documentation},
  url          = {https://docs.pytorch.org/docs/stable/distributed.tensor.html}
}

@inproceedings{jangda2022coconet,
  author    = {Abhinav Jangda and Jun Huang and Guodong Liu and Amir Hossein Nodehi Sabet and Saeed Maleki and Youshan Miao and Madanlal Musuvathi and Todd Mytkowicz and Olli Saarikivi},
  title     = {Breaking the Computation and Communication Abstraction Barrier in Distributed Machine Learning Workloads},
  booktitle = {Proceedings of the 27th ACM International Conference on Architectural Support for Programming Languages and Operating Systems},
  pages     = {402--416},
  publisher = {ACM},
  year      = {2022},
  doi       = {10.1145/3503222.3507778},
  url       = {https://doi.org/10.1145/3503222.3507778}
}

@inproceedings{unger2022unity,
  author    = {Colin Unger and Zhihao Jia and Wei Wu and Sina Lin and Mandeep Baines and Carlos Efrain Quintero Narvaez and Vinay Ramakrishnaiah and Nirmal Prajapati and Pat McCormick and Jamaludin Mohd-Yusof and Xi Luo and Dheevatsa Mudigere and Jongsoo Park and Misha Smelyanskiy and Alex Aiken},
  title     = {Unity: Accelerating {DNN} Training Through Joint Optimization of Algebraic Transformations and Parallelization},
  booktitle = {16th USENIX Symposium on Operating Systems Design and Implementation (OSDI 22)},
  pages     = {267--284},
  publisher = {USENIX Association},
  year      = {2022},
  isbn      = {978-1-939133-28-1},
  url       = {https://www.usenix.org/conference/osdi22/presentation/unger}
}

@inproceedings{lu2025trainverify,
  author    = {Yunchi Lu and Youshan Miao and Cheng Tan and Peng Huang and Yi Zhu and Xian Zhang and Fan Yang},
  title     = {{TrainVerify}: Equivalence-Based Verification for Distributed {LLM} Training},
  booktitle = {Proceedings of the ACM SIGOPS 31st Symposium on Operating Systems Principles},
  pages     = {237--253},
  publisher = {ACM},
  year      = {2025},
  doi       = {10.1145/3731569.3764850},
  url       = {https://doi.org/10.1145/3731569.3764850}
}

@article{zulkifli2025verifying,
  author  = {Kahfi S. Zulkifli and Wenbo Qian and Shaowei Zhu and Yuan Zhou and Zhen Zhang and Chang Lou},
  title   = {Verifying Computational Graphs in Production-Grade Distributed Machine Learning Frameworks},
  journal = {CoRR},
  volume  = {abs/2509.10694},
  year    = {2025},
  doi     = {10.48550/arXiv.2509.10694},
  url     = {https://arxiv.org/abs/2509.10694}
}

@inproceedings{xie2022reduction,
  author    = {Ningning Xie and Tamara Norman and Dominik Grewe and Dimitrios Vytiniotis},
  title     = {Synthesizing Optimal Parallelism Placement and Reduction Strategies on Hierarchical Systems for Deep Learning},
  booktitle = {Proceedings of Machine Learning and Systems},
  volume    = {4},
  publisher = {mlsys.org},
  year      = {2022},
  url       = {https://proceedings.mlsys.org/paper_files/paper/2022/hash/f0f9e98bc2e2f0abc3e315eaa0d808fc-Abstract.html}
}

@inproceedings{cowan2023mscclang,
  author    = {Meghan Cowan and Saeed Maleki and Madanlal Musuvathi and Olli Saarikivi and Yifan Xiong},
  title     = {{MSCCLang}: Microsoft Collective Communication Language},
  booktitle = {Proceedings of the 28th ACM International Conference on Architectural Support for Programming Languages and Operating Systems, Volume 2},
  pages     = {502--514},
  publisher = {ACM},
  year      = {2023},
  doi       = {10.1145/3575693.3575724},
  url       = {https://doi.org/10.1145/3575693.3575724}
}

@misc{vllm2026ppoutput,
  author       = {{vLLM Contributors}},
  title        = {Pipeline-Parallel Sampled-Token Handoff in the {vLLM} {GPU} Model Runner},
  year         = {2026},
  howpublished = {vLLM v0.19.0 source code},
  url          = {https://github.com/vllm-project/vllm/blob/v0.19.0/vllm/v1/worker/gpu_model_runner.py}
}

@misc{vllmascend2026sampling,
  author       = {{vLLM Ascend Contributors}},
  title        = {Sampling Path Optimization for {NPU} Model Runner {V1}},
  year         = {2026},
  howpublished = {vLLM Ascend request for comments, issue 9269},
  url          = {https://github.com/vllm-project/vllm-ascend/issues/9269}
}

\end{document}